\newtheorem{lemma}{Lemma}[section]
\newtheorem{theorem}[lemma]{Theorem}
\theoremstyle{definition}
\theoremstyle{remark}
\begin{document}

\title{\bf Tighter Bounds on the Inefficiency Ratio of Stable Equilibria in Load Balancing Games}

\author{Akaki Mamageishvili}
\author{Paolo Penna}

\affil{Department of Computer Science, ETH Zurich, Switzerland}

\date{November, 2015}

\maketitle

\begin{abstract}
In this paper we study the inefficiency ratio of stable equilibria in load balancing games introduced by  \citet{original}. We prove tighter lower and upper bounds of $7/6$ and $4/3$, respectively. This improves over the best known bounds in problem ($19/18$ and $3/2$, respectively). Equivalently, the results apply to the question of how well the optimum for the $L_2$-norm can approximate the $L_{\infty}$-norm (makespan) in  identical machines scheduling. 
\end{abstract}

%%%%%%%%%%%%%%%%%%%%%%%%%%%%%%%%%%%%%%%%%%%%%%%%%%%%%%%%%%%%%%%%%%%%%%%%%%%%%%%%
%%%   Introduction
%%%%%%%%%%%%%%%%%%%%%%%%%%%%%%%%%%%%%%%%%%%%%%%%%%%%%%%%%%%%%%%%%%%%%%%%%%%%%%%%
%
\section{Introduction}

Load balancing problems are actively studied in the context of games. Transition from a centralized optimization problem to a decentralized environment is immediate, by assuming that the jobs are owned by selfish but rational players. These games are prototypical of resource allocation problems in which users (players) do not act altruistically, therefore leading the system to  suboptimal configurations. 
Naturally, one can consider the \emph{social optimum} as the allocation minimizing the \emph{makespan}, that is, the maximum load over all machines (a classical measure of efficiency). In contrast, in the game-theoretic setting, the players will optimize for their own job only, typically reaching what is called a \emph{Nash equilibrium}, that is, an allocation such that no player has an interest in moving to another machine.\footnote{In this work we consider  \emph{pure} Nash equilibria, that is, configurations in which each player chooses one strategy and unilateral deviations are not beneficial. In general, games may also posses mixed Nash equilibria in which players choose strategies according to a probability distribution.}  
The study of how inefficient these Nash equilibria can be is a central problem in algorithmic game theory, as it measures how much selfishness can impede optimization.

\citet{original} introduced and studied the
\emph{inefficiency ratio of stable equilibria (IRSE)} in several games, including load balancing ones (see Section~\ref{sec:preli} for formal definitions). This notion quantifies the efficiency loss in games when players play certain \emph{noisy} best-response dynamics (see Section~\ref{sec:related} for a detailed discussion). For load balancing games, 
the IRSE has another very simple and natural interpretation (which is also of independent interest and studied earlier):

\begin{quote}
	\emph{Are the allocations minimizing the $L_2$-norm (sum of the squares of the machine loads) also sufficiently good for minimizing the $L_\infty$-norm (makespan)?}
\end{quote}
Intuitively, the IRSE on load balancing games is equal to some value $c$ if  \emph{every} allocation minimizing the $L_2$-norm is automatically a $c$-approximation for the $L_\infty$-norm (i.e., the makespan of this allocation is at most $c$-times the optimal makespan). An exact bound on the IRSE is not known (as opposed to othe measures related to the inefficiency of equilibria -- see next section).  \citet{original} proved an upper bound of $\frac{3}{2}$ on IRSE and observed that an example in \cite{AloAzaWoeYad97} implies a lower bound on IRSE of $\frac{19}{18}$.
In this work we improve both bounds: In Section~\ref{sec:lower-bound} we show an improved lower bound of $7/6$, while the upper bound of $4/3$ is proved in Section~\ref{sec:upper-bound}.

\subsection{Significance of the results and related work}\label{sec:related}
 The inefficiency of Nash equilibria is often measured through two classical notions: The \emph{pice of anarchy (PoA)} introduced in \citet{PoAoriginal} for load balancing games on related machines, and the \emph{price of stability (PoS)} introduced in \citet{PoSoriginal}, which compare respectively the \emph{worst} and the \emph{best} Nash equilibrium to the social optimum. In some cases, these notions can be considered too extreme as they may include "unrealistic" equilibria. 
 
  The IRSE \cite{original} studies the quality of equilibria selected by certain \emph{noisy} best-response dynamics \cite{Blu98}. Intuitively, these dynamics will most likely rest on pure Nash equilibria  \emph{minimizing the potential} of the game, and the IRSE can be seen as the price of anarchy restricted to these selected equilibria (in fact, IRSE is also known in the literature under the name of \emph{potential optimal PoA} by \citet{KawMak13}, who also considered the analogous \emph{potential optimal PoS}). In load balancing games (and several others) we have $PoS < IRSE < PoA$, which in a sense tells that the PoA and the PoS are either too pessimistic or too optimistic. Specifically, on $m$ machines   $PoA=\left(2-\frac{1}{m+1}\right)$ \cite{PoA,agtBook}, $PoS=1$ \cite{agtBook}, while $IRSE$ is between $19/18$ and $3/2$ \cite{original}.
 The latter bounds are strengthened in the present paper to $7/6\leq IRSE\leq 4/3$. This means that players can easily compute a $4/3$ approximation of the optimum via these simple dynamics, but also that optimal (exact) solutions are unlikely to be chosen either, in some instances. 
 
The upper bound $4/3$ also suggests and intriguing comparison with the study of \emph{sequential PoA} by 
\citet{Hassin} for these games: There players are far from myopic and the equilibrium is meant on an extensive form games in which players are able to reason about future moves of following players. Our upper bound says that myopic best-response are not worse than in their dynamics where players are capable of more sophisticated reasoning.

%Other studies  refine the PoA and the PoS by restricting to equilibria with certain \emph{stochastic stability} properties, e.g.   \citet{ChuKatPhrRot08,ChuPyr09}.  
%The robustness of the PoA and PoS bounds has been investigated by \citet{ChrKouSpi11} and \citet{Rou09} who considered approximate and correlated equilibria, showing that bounds for the Nash equilibria still apply to weaker notions of equilibria (that are often more %tractable for the players). One notable example is the work by  \citet{KlePilTar09} which uses some sort of stochastic stability to prove that a particular (regret minimization) dynamics  outperforms  the generic PoA bounds for correlated equilibria.

%%%%%%%%%%%%%%%%%%%%%%%%%%%%%%%%%%%%%%%%%%%%%%%%%%%%%%%%%%%%%%%%%%%%%%%%%%%%%%%%
%%%   Preliminaries 
%%%%%%%%%%%%%%%%%%%%%%%%%%%%%%%%%%%%%%%%%%%%%%%%%%%%%%%%%%%%%%%%%%%%%%%%%%%%%%%%

\section{Preliminaries}\label{sec:preli}
In load balancing we have $n$ jobs of weight $w_1,\ldots,w_n$ that we want to put on $m$ identical machines (each job is allocated to one machine). The job allocation determines the load $l_j$ of each machine $j$, that is the sum of the jobs weights that are allocated to this machine. The goal is to find an allocation that has the lowest possible \emph{makespan}, that is, the maximum load over all machines.

In \emph{load balancing games} each job is a \emph{player} who  can choose any of the $m$ possible machines. The cost for player $i$ is simply the load of the machine chosen by this player, and naturally each player aims at minimizing \emph{her own cost}. The strategies of all players specify a job allocation (in the game theoretic terminology this is the strategy profile).

An allocation minimizing the makespan is called a \emph{social optimum}, and its makespan is called a \emph{social optimum makespan}. The \emph{potential} associated to an allocation is the sum of the \emph{squares} of the corresponding machine loads,  ${l_1}^2+\cdots+{l_m}^2$  where $l_j$ is the load of machine $j$ at this allocation. 
An allocation minimizing the potential function is called a \emph{potential optimum}.

It is well known that load balancing games are weighted potential games with the above  potential function. This means that all pure Nash equilibria, allocations where no player can unilaterally improve moving to another machine, are actually `local optimal' for the potential (a single job move cannot reduce the potential).   
Asadpour and Saberi \cite{original} introduced and studied the \emph{inefficiency ratio of stable equilibria (IRSE)}, which is the ratio between the \emph{worst} makespan of a potential optimal allocation  divided by the social optimum makespan. 

Potential optimum allocations  satisfy the following condition (see  Figure~\ref{fig:bundle-swap}). Split the total load of each machine into two bundles of jobs, that is, $l_j=x_j+y_j$ where $x_j$ is the sum of the weights of a (possibly empty) subset of jobs allocated to $j$. If two machines $i$ and $j$ satisfy $x_i < x_j$, then $y_i>y_j$ for otherwise swapping $x_i$ with $x_j$ reduces the potential. Pure Nash equilibria satisfy the weaker condition that a single job in machine $i$ does not improve if moving to another machine $j$, that is, $l_i -w_i\leq l_j$.   

\begin{figure}
	\centering
	\includegraphics[scale=.5]{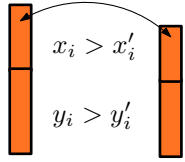}
	\caption{When a swap of bundles of jobs reduces the potential.}
	\label{fig:bundle-swap}
\end{figure}

%In a load balancing games there are $n$ players and $m$ identical machines. Each player $i$ owns a job of weight $w_i$. Each player can choose one machine (the player's strategy) for his/her job. A  strategy profile defines a job allocation. The load of machine $j$, denote as $l_j$, is the sum of the weights of the jobs assigned to this machine. The cost of a player is the load of the machine this player is assigned to. The maximum of all machine loads is called  \emph{makespan} of job allocation. The strategy profile which minimizes a makespan is called a social optimum, and its makespan is called a social optimum makespan. The load balancing game is a weighted potential game whose potential function is the sum of \emph{squares} of machine loads:
%$\sum_{j=1}^{j=m}{l_j}^2$.
% The strategy profile which minimizes the potential function is called a potential minimizer and its makespan is called a potential minimizer makespan. Authors in \cite{original} introduced and studied the \emph{Inefficiency Ratio of Stable Equilibria (IRSE)}, which is the worst case potential optimizer makespan divided by the social optimum makespan. Authors in \cite{original} proved an upper bound of $\frac{3}{2}$ on IRSE and provided an example where IRSE achieves $\frac{19}{18}$.

\section{Improved lower bound}\label{sec:lower-bound}

In this section we strengthen the $19/18$ lower bound on IRSE in \cite{original,AloAzaWoeYad97}.

\begin{figure}
    \centering

    \includegraphics[scale=.5]{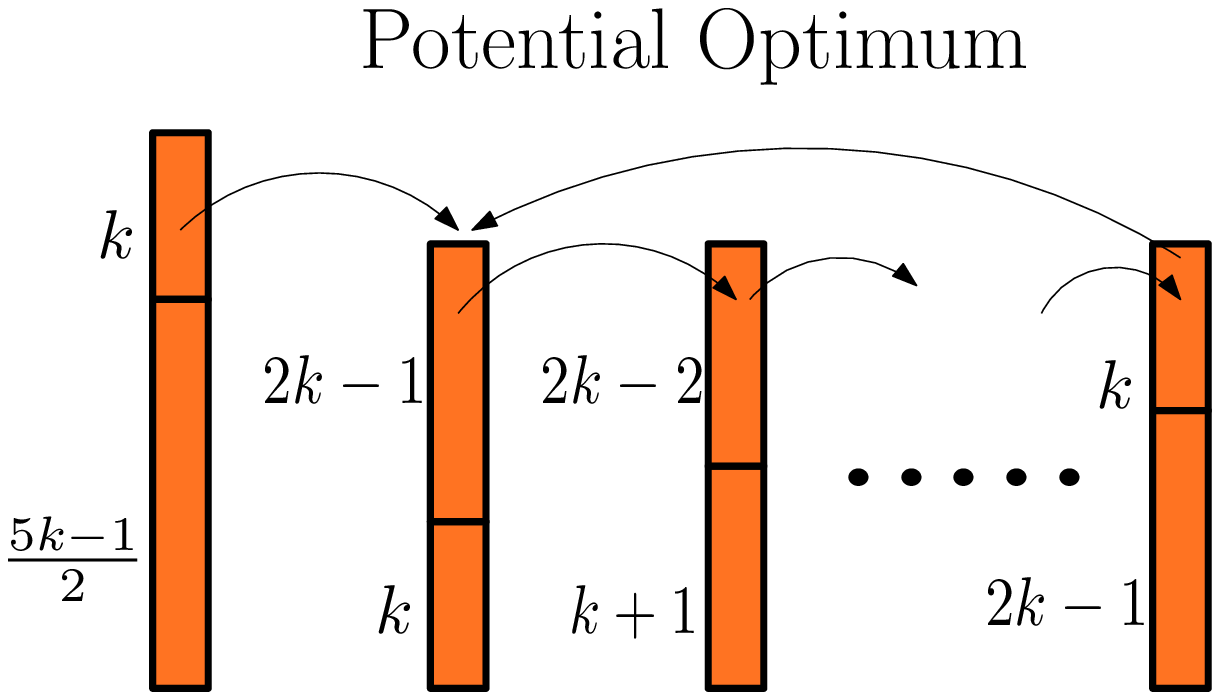}
    \hspace{0.1\textwidth}
    \includegraphics[scale=.5]{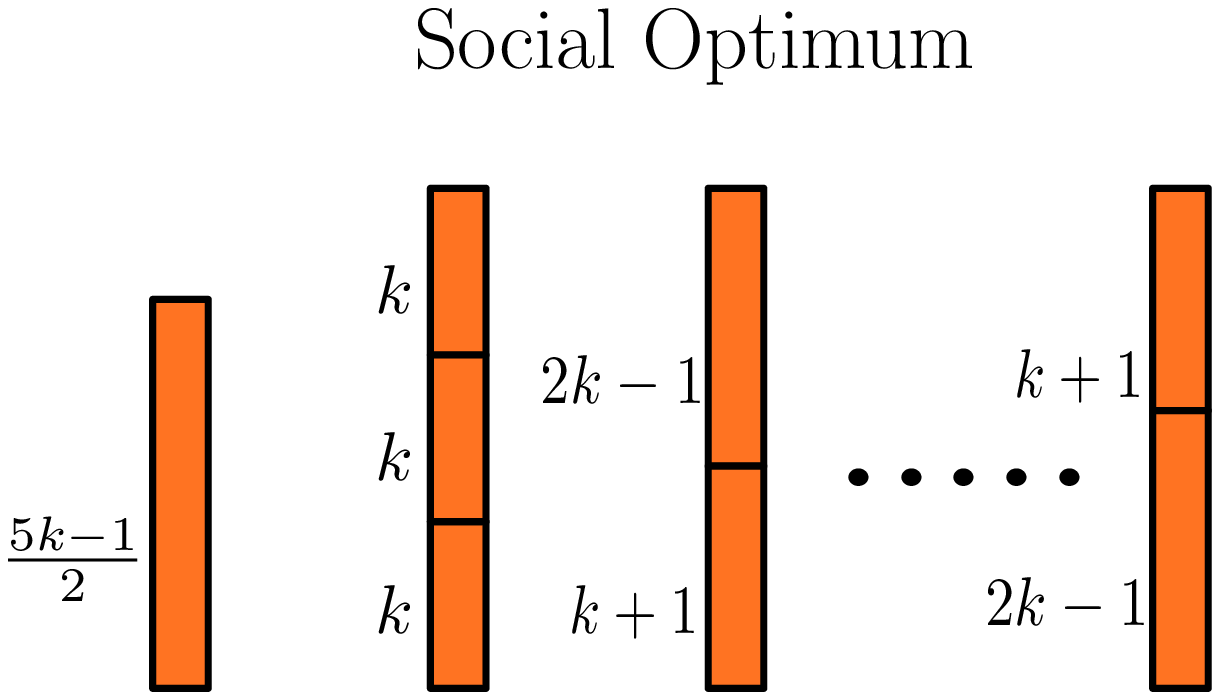}
    \caption{Lower bound 7/6 on IRSE.}
    \label{fig:lower_bound}
\end{figure}

\begin{theorem}
IRSE in load balancing games is at least $7/6$.
\end{theorem}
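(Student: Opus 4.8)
The plan is to exhibit a single explicit instance, namely the one depicted in Figure~\ref{fig:lower_bound}, together with two allocations: a \emph{potential optimum} $\sigma$ whose makespan equals $7$ (in suitable units) and an allocation $\tau$ whose makespan equals $6$. Since the IRSE is, by definition (Section~\ref{sec:preli}), the worst makespan of a potential optimum divided by the optimal makespan, producing such a pair immediately gives the bound: $\sigma$ witnesses that some potential optimum has makespan $7$, and $\tau$ witnesses that the optimal makespan is at most $6$, so the ratio is at least $7/6$. First I would fix the job weights of the instance and identify $\sigma$ as an allocation in which exactly one machine carries load $7$ while the remaining machines are as balanced as the weights permit, and $\tau$ as an allocation in which every machine has load at most $6$.

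The essential design choice is to pick the weights so that the potential (the sum of the squared loads) is \emph{globally} minimised at $\sigma$ despite its load-$7$ machine. Note that a flatter profile always has smaller potential by convexity, so the only way $\sigma$ can be a potential optimum is if every strictly flatter load profile is \emph{infeasible} for the given weights. The mechanism I would engineer is exactly this: any profile with smaller sum of squares (in particular any max-load-$6$ profile more balanced than $\tau$) would require some machine to attain an intermediate load value that no sub-collection of the chosen weights can realise. In other words, the weights are selected so that the set of attainable machine loads skips precisely the values that a lower-potential profile would need, which rigidly forces the potential minimiser to remain lopsided around a load-$7$ machine while still leaving the max-load-$6$ allocation $\tau$ available.

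The key step, and the main obstacle, is to certify that $\sigma$ is a genuine potential optimum rather than merely a Nash equilibrium or a local (single-swap) optimum. I would do this in two stages. First, I verify the bundle-swap optimality condition from Section~\ref{sec:preli}: for every pair of machines, no swap of two job bundles decreases the potential; this is the necessary rigidity condition illustrated in Figure~\ref{fig:bundle-swap}. Second, I upgrade this to global optimality by enumerating the finitely many load profiles whose potential does not exceed that of $\sigma$ and checking, via the weight-induced subset-sum constraints described above, that each such flatter profile is infeasible. The genuinely delicate point is the simultaneous satisfaction of three competing requirements on the weights: the instance must admit the lopsided potential minimiser $\sigma$ with a load-$7$ machine, it must admit a max-load-$6$ allocation $\tau$, and yet it must forbid every intermediate flatter allocation of smaller potential. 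Balancing these constraints---essentially arranging the attainable-load set to have exactly the right gaps---is where the real work lies; once the weights are fixed, the makespan computations and the finite optimality check are routine.
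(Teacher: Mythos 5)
There is a genuine gap: your proposal is a construction \emph{plan}, not a construction. The entire burden of the theorem lies in exhibiting concrete weights and certifying that the lopsided allocation is a \emph{global} potential optimum, and you defer exactly these steps (``pick the weights so that\ldots'', ``where the real work lies''). Nothing is actually verified: no weight multiset is given, so neither the bundle-swap check nor your proposed enumeration of flatter load profiles can be carried out, and the claim that the attainable-load set ``skips precisely the values'' a lower-potential profile would need is asserted, not established. The paper does this work explicitly: for $m=k+1$ machines it takes weights $k,k,k,k+1,k+1,\ldots,2k-1,2k-1,(5k-1)/2$, shows the allocation with the big job paired with a weight-$k$ job on machine $1$ (load $(7k-1)/2$) has the \emph{same} potential as the balanced allocation of makespan $3k$, and then certifies global optimality by conditioning on the load of the machine carrying the largest job: for each feasible choice of that load, balancing the remaining machines is potential-minimal, and the finitely many cases are compared directly. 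Note the subtlety your plan misses: the bad allocation need not have strictly smaller potential than every flatter profile --- it merely \emph{ties} with the social optimum, and both are potential optima.

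A second, more structural problem: you aim for a single instance with ratio exactly $7/6$ (makespan $7$ versus $6$), but the paper's bound is asymptotic --- the family gives ratio $\frac{(7k-1)/2}{3k}=\frac{7m-8}{6m-6}$, which is strictly below $7/6$ for every finite $m$ and only tends to $7/6$. Since IRSE as a bound over the class of games is a supremum over instances, a family suffices, and the paper provides no finite instance attaining $7/6$ exactly; indeed the authors conjecture their construction is best possible. Your proposal therefore presupposes the existence of an object (an exact-$7/6$ instance) that is itself an unproved and nontrivial claim, quite possibly false. To repair the argument you should either exhibit such an instance together with the full global-optimality certificate, or, as the paper does, construct a parameterized family and take the limit.
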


\begin{proof}
Consider the instance in Figure~\ref{fig:lower_bound} where  $m=k+1\geq 3$,  $n=2k+2$ and the weights of jobs are $k,k,k,k+1,k+1,\ldots,2k-1,2k-1,(5k-1)/2$. We prove that the left allocation of jobs on the figure minimizes the potential function, while the one on the right has optimal makespan,  thus implying 
\[
IRSE \geq \frac{(7k-1)/2}{3k} = \frac{7m -8}{6m-6}
\]
which tends to $7/6$ as $m$ goes to infinity.
 First note that the potential in both allocations are equal. Consider any job allocation and without loss of generality assume that the job with the largest  weight is on machine $1$. If we fix a load on machine $1$, then any job allocation which balances the load on the other machines minimizes the potential (among all allocations with this fixed load on machine $1$). Therefore, if the largest job is alone on machine $1$ then the potential is minimized in the social optimum case, while if it is located on  machine one together with job of smallest weight then the potential is optimized when the job allocation is like on the left side of Figure~\ref{fig:lower_bound}. If the load on machine $1$ is strictly larger than $5(k-1)/2+k$ then the potential is not a potential optimal allocation. We conclude that both allocations are potential optimal, in particular the one on the left which gives the lower bound on IRSE.
\end{proof}

\section{Improved upper bound}\label{sec:upper-bound}

Without loss of generality assume that a makespan in social optimum is equal to $1$ and loads of machines in the potential optimizer job allocation are sorted in a non-increasing order $l_1\geq l_2\geq \cdots \geq l_m$.

\begin{theorem}
IRSE in load balancing games is at most $4/3$.
\end{theorem}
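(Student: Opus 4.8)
The plan is to argue by contradiction. Normalize the optimal makespan to $1$ and suppose the potential optimum has $l_1 > 4/3$. I would first record the exact meaning of potential optimality as a statement about subset sums. Writing $w(S)$ for the total weight of a set $S$ of jobs, the computation of the change of $\sum_j l_j^2$ when a bundle $S_i$ on machine $i$ is swapped with a bundle $S_j$ on machine $j$ gives a difference of $2\,d\,(d-D)$ with $d=w(S_i)-w(S_j)$ and $D=l_i-l_j$, so the swap strictly lowers the potential exactly when
\[
0 < w(S_i) - w(S_j) < l_i - l_j .
\]
Hence potential optimality is equivalent to requiring that for every pair of machines with $l_i>l_j$ and every choice of bundles $S_i,S_j$, the difference $w(S_i)-w(S_j)$ avoids the open interval $(0,\,l_i-l_j)$. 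The special case $S_j=\emptyset$ with $S_i$ a single job recovers the Nash inequality $l_i-w\le l_j$.

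The first step is an easy reduction showing that every job on machine $1$ is heavier than $1/3$. Let $b$ be the smallest job on machine $1$; by the Nash inequality every machine satisfies $l_j\ge l_1-b$. If $b\le 1/3$, then $l_j\ge l_1-1/3>1$ for all $j$, so the total weight $W=\sum_j l_j$ exceeds $m$; but $W$ equals the total load in the optimal schedule, which is at most $m$ since each of the $m$ machines there carries load at most $1$. This contradiction forces $b>1/3$. Invoking the previously known bound $l_1\le 3/2$ of \citet{original}, machine $1$ then carries at most four jobs, each lying in $(1/3,1]$ and summing to a value in $(4/3,\,3/2]$.

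The heart of the proof is then a finite case analysis on the multiset of jobs on machine $1$ (two, three, or four jobs). In each case I would exhibit a second machine and a pair of bundles whose swap meets the criterion $0<w(S_i)-w(S_j)<l_i-l_j$, contradicting optimality. The leverage comes from the fact that all loads lie in the narrow band $[l_1-b,\,l_1]$ of width $b$, together with the forbidden-interval constraints that the jobs of every other machine must satisfy relative to the jobs of machine $1$; these constraints sharply restrict how the weight on a lightly loaded machine can be composed.

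The step I expect to be the main obstacle is precisely this last one: guaranteeing that in every surviving configuration some machine admits a subset whose weight falls into the relevant forbidden interval. The difficulty is that a machine could \emph{a priori} consist of a single indivisible job, so no nontrivial bundle is available there; ruling this out requires combining the load band $l_j\ge l_1-b$ with a count of how the heavy (weight $>1/3$) jobs can be packed into the optimal schedule, where at most two of them fit per machine. Managing these interacting constraints simultaneously — rather than any single inequality — is where the real work lies.
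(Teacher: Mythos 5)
There is a genuine gap, and you correctly identify its location yourself: the proposal sets up the frame but defers exactly the step that constitutes the paper's actual proof. Your preliminaries are sound and parallel the paper's: the same normalization, the two-machine bundle-swap criterion $0 < w(S_i)-w(S_j) < l_i - l_j$ (the paper states this in its preliminaries), and the fact that every job on machine $1$ exceeds $1/3$ (the paper gets the sharper bound: writing $l_1 = 1+\alpha$, the smallest job $s$ on machine $1$ has weight $\alpha+\beta > \alpha$ with $0<\beta<\alpha$, since otherwise moving $s$ to machine $m$, whose load is below $1$, lowers the potential). But the heart of the paper is a structural dichotomy (its Lemma~\ref{le:high-jobs}): \emph{every} machine carries either one job of weight at least $1-\beta$ or two jobs of weight strictly greater than $\alpha$. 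Given this, the contradiction is a short packing count in the optimum (no three jobs $>\alpha>1/3$ fit on one machine, and a job $\geq 1-\beta$ cannot share a machine with a job $>\alpha$ since $\alpha>\beta$), so these jobs cannot be scheduled on $m$ machines with makespan $1$. Your sketch gestures at this count but supplies no substitute for the lemma, and your "finite case analysis on the multiset of jobs on machine $1$" cannot by itself deliver it: the needed constraints concern the composition of every \emph{other} machine, not machine $1$.

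Moreover, your declared toolkit — pairwise bundle swaps between two machines — is demonstrably insufficient for the hard case of the lemma. In the paper's case $l_i \geq 1$ with a small-job bundle $x \geq \beta$ and next job $y \leq 1/3$, the contradiction requires a simultaneous rearrangement across \emph{three} machines (job $s$ moves from machine $1$ to machine $i$; of the bundles $x,y$, the smaller moves to machine $m$ and the larger to machine $1$), followed by a two-stage linear bounding of the potential difference: first in $x$ (slope $y+z-1 \leq 0$, using $y+z \leq l_m$, itself justified by yet another swap), then in $z$ (slope $2y-(\alpha+\beta) \geq 0$, using $z \geq 1-\beta-\alpha$), arriving at the strictly positive quantity $2(\alpha y - y^2)$. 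No single two-machine swap of the kind you formulated certifies this. Finally, your plan to \emph{rule out} machines consisting of a single indivisible job goes in the wrong direction: the paper does not exclude them, but absorbs them into the first branch of the dichotomy (one job of weight $\geq 1-\beta$, forced because $l_i \geq 1-\beta$ lest moving $s$ to machine $i$ decrease the potential), which then feeds into the final counting. So the proposal is an honest and well-oriented outline, but the theorem's proof is precisely the part left unwritten.
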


\begin{proof}
We have to prove that $l_1\leq 4/3$. Assume that it does not hold, that is $l_1=1+\alpha>4/3$ implying $\alpha>1/3$. Machine $1$ contains at least two jobs, otherwise the optimum makespan would not be $1$. The smallest job on machine $1$, which we call $s$, is strictly larger than $\alpha$, otherwise since $l_m<1$ we can decrease the potential by moving $s$ to the last machine. Assume that the weight of $s$ is $\alpha+\beta$ where $\beta>0$ and since $\alpha+\beta \leq l_1-\alpha-\beta=1-\beta$ it implies $\beta < \alpha$. The proof of the theorem is based on the following property of potential optimal allocations (whose proof is postponed). 

\begin{figure}
	\centering
	\includegraphics[scale=.6]{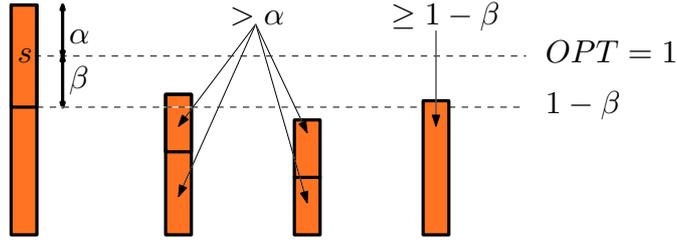}
	\caption{Proof of upper bound --  Lemma~\ref{le:high-jobs}.}
\end{figure}

\begin{lemma}\label{le:high-jobs}
For any $\alpha > 1/3$, in a potential optimal allocation, every machine  must have one job of weight at least $1-\beta$ or two jobs of weight strictly bigger than $\alpha$.
\end{lemma}
Note that for $\alpha >1/3$ this contradicts the fact that the optimum makespan is equal $1$ since any job from machine $1$ has weight larger than $\alpha$ and together with these jobs cannot be allocated on $m-1$ machines without exceeding $1$ (clearly the jobs in machine $1$ cannot be together in the optimum makespan, no three jobs of weight $\alpha$ can fit together,  nor a job of $1-\beta$ with a job of $\alpha$). 
\end{proof}

\subsection{Proof of Lemma~\ref{le:high-jobs}}
We next prove Lemma~\ref{le:high-jobs}. First observe that $l_i \geq 1-\beta$ for otherwise we can move job $s$ from machine $1$ to machine $i$ and decrease the potential. Sort the jobs on machine $i$ in increasing weight, and consider a bundle of small jobs such that the load is still above or equal to $1-\beta$ after removing them from the machine (this bundle can be empty). Let $x$ be the overall weight of this bundle, and $y$ be the size of the next job whose removal reduces the load below $1-\beta$. That is 
\begin{align}\label{eq:proof-claim-UB:bundle}
	l_i - x \geq& 1-\beta & \text{ and }& & l_i - x - y <& 1 - \beta.
\end{align} 
Also note that 
\begin{equation}\label{eq:proof-claim-UB-small-vs-bundle}
	x + y \geq \alpha + \beta
\end{equation}
otherwise we can exchange job $s$ on machine $1$ with job bundle of weight $x + y$ on machine $i$ and strictly decrease the potential. If $y$ was the only job after removing $x$, then by definition $y \geq 1-\beta$ and the lemma holds. Otherwise, we show that $y \geq \alpha$. 
We next distinguish two cases depending on the load $l_i$ of machine $i$ (see Figure~\ref{fig:UB:lemma-proof}):

\begin{figure}
	\centering
	\includegraphics[scale=.6]{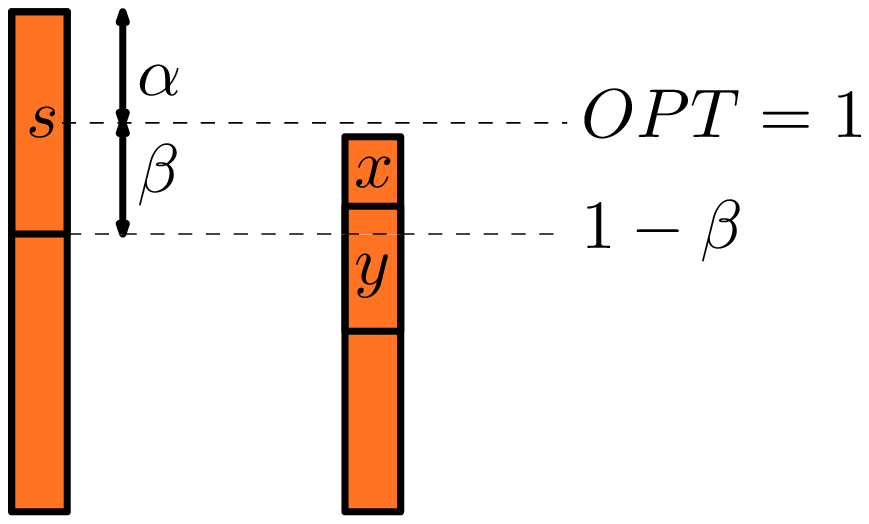} \hspace{1.5cm}
	\includegraphics[scale=.6]{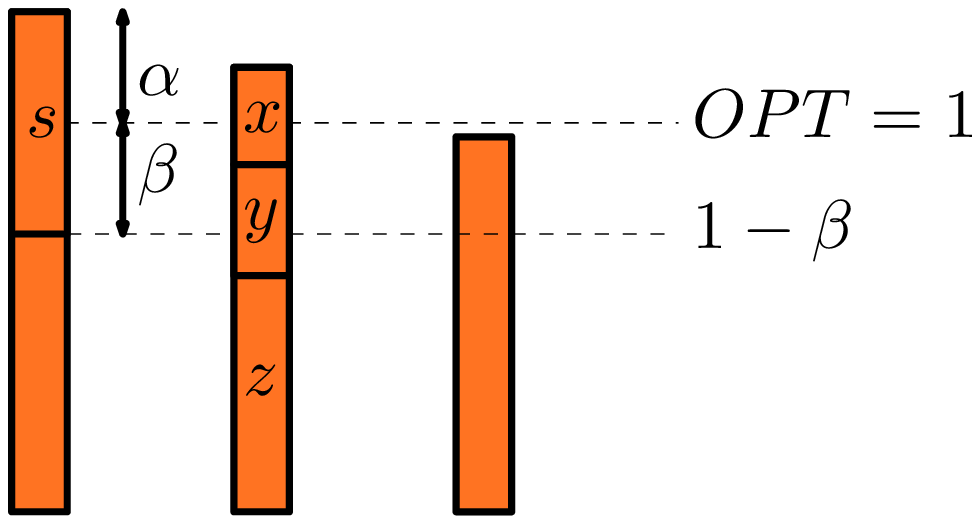}
	\caption{The two cases to prove Lemma~\ref{le:high-jobs}.}
	\label{fig:UB:lemma-proof}
\end{figure}

\begin{description}
	\item[($l_i < 1$).] From the first inequality in \eqref{eq:proof-claim-UB:bundle} we get $x < \beta$, and then by \eqref{eq:proof-claim-UB-small-vs-bundle} $y>\alpha$. 

    \item[($l_i\geq 1$).]  
    If $x<\beta$ then $y>\alpha >1/3$ using the same argument of the previous case. Assume therefore $x\geq \beta$ and $y \leq 1/3 < \alpha$. Let $z$ be the load of machine $i$ after removing the bundle of size $x+y$, that is, $l_i= x + y + z$.  We rearrange the jobs in the potential optimum and show that this will further reduce the potential, thus a contradiction. The job rearrangement goes as follows: 
\begin{enumerate}
	\item Job $s$ moves from machine $1$ to machine $i$;
	\item The smallest between $x$ and $y$ moves to machine $m$, while the biggest moves to machine $1$. 
\end{enumerate}
Consider first the case $x\leq y$. Then the difference between potential of the original allocation and potential of the new one is 
\begin{align} 
	(1+\alpha)^2+(x+y+z)^2+l_m^2-(1-\beta+y)^2-(z+\alpha+\beta)^2-(l_m+x)^2 & = \nonumber
	\\ 2\Big(\alpha+xy+yz+xz+\beta+\beta y- (\beta^2+y+\alpha z+\alpha\beta+z\beta+l_mx)\Big) & > \nonumber
	\\ 2\Big(\alpha+xy+yz+xz+\beta+\beta y- (\beta^2+y+\alpha z+\alpha\beta+z\beta+x)\Big)
	\label{potential}
\end{align}where the inequality is due to $l_m <1$. 
Next we prove that this quantity is positive and obtain the desired contradiction. Note that $y+z\leq l_m$ otherwise moving the  bundle of weight $x$ to machine $m$ reduces the potential. Therefore \eqref{potential} is linear in $x$ with slope $y+z - 1 \leq 0$ and since $x \leq y$, we can bound it from below by
\begin{align}
	2\Big(\alpha+y^2+2yz+\beta+\beta y- (\beta^2+y+\alpha z+\alpha\beta+z\beta+y)\Big) & \geq \nonumber
	\intertext{since this quantity is linear in $z$ with slope $2y - (\alpha + \beta)\geq x + y - (\alpha + \beta)\geq 0$ and $z \geq 1- \beta - x \geq 1 - \beta - \alpha$} 
	2\Big(\alpha+2y - \beta y - y^2+\beta- (\beta^2+y+\alpha  - \alpha y+\beta- \beta^2 - \beta y+y)\Big) & = \nonumber \\
	2\Big(y - \beta y - y^2+\beta- (- \alpha y+\beta- \beta y+y)\Big)& = \nonumber \\
	2\Big(\alpha y  - y^2 \Big)&  \nonumber
\end{align}
which is strictly positive since $y< \alpha$.

		The case $y <x$ is similar to the previous one. Specifically, now the potential difference is strictly bigger than \eqref{potential} but with $x$ and $y$ exchanged,
		\begin{align} 
			2\Big(\alpha+xy+yz+xz+\beta+\beta x- (\beta^2+x+\alpha z+\alpha\beta+z\beta+y)\Big) &  \label{potential:y<x}
		\end{align}
		and this quantity is linear in $y$ with non-positive slope  $x+z-1\leq 0$, since otherwise moving bundle $y$ to machine $m$ would reduce the potential. For the case $x < \alpha$ the rest of the proof is identical to the previous case with $x$ and $y$ exchanged. For $x \geq \alpha$ we use $y < \alpha$ and  obtain a lower bound for \eqref{potential:y<x} by replacing $y$ with $\alpha$, and then $z$ by $1 - \beta - \alpha$:
		\begin{align*} 
			2\Big(\alpha+\alpha x+x z+\beta+\beta x- (\beta^2+x+\alpha\beta+z\beta+ \alpha)\Big)& = \\
			2\Big(\alpha+\alpha x+x - x(\alpha + \beta)+\beta+\beta x- 
			(\beta^2+x+\alpha\beta+\beta - \beta(\alpha+\beta)+  \alpha)\Big)& = 0
		\end{align*}
		where in the second step we use $x \geq \beta$ since $\beta < \alpha$ as observed before this lemma. 
		
	\end{description}
This completes the proof of Lemma~\ref{le:high-jobs}.

\section{Conclusions}

In this paper we improve lower and upper bounds on the inefficiency ratio of stable equilibria in load balancing games, while still leaving a gap. We believe that our lower bound example can not be improved. In order to  prove this claim one has to consider global properties of potential optimizer, while in our proof only local properties have been used. 

\paragraph{Acknowledgments.} This work has been partially supported by the Swiss National Science Foundation (SNF) under the grant number
200021\_143323/1.

 \bibliographystyle{plainnat}
\bibliography{PotentialOptimum}

\end{document}